\numberwithin{equation}{section}
\theoremstyle{plain}
\newtheorem{thm}{Theorem}[section]
\newtheorem{prop}[thm]{Proposition}
\newtheorem{lem}[thm]{Lemma}
\theoremstyle{definition}
\newtheorem{rem}[thm]{Remark}
\begin{document}
\title{Operator orderings and Meixner-Pollaczek polynomials}
\author{Genki Shibukawa}
\date{\empty}
\pagestyle{plain}

\maketitle

\begin{abstract}
The aim of this paper is to give identities which are generalizations of the formulas given by Koornwinder [J. Math. Phys. 30, (1989)] and Hamdi-Zeng [J. Math. Phys. 51, (2010)]. 
Our proofs are much simpler than and different from the previous investigations. 
 
\end{abstract}
\section{Introduction}
Let $W$ be the Weyl algebra generated by $p$ and $q$ with the relation $[p,q]:=pq-qp=1$. 
In this paper, we prove the following theorems.
\begin{thm}
\label{thm:theorem1}
We put $T:=pq+qp$. We obtain
\begin{align}
\label{eq:theorem1}
2^{n}\sum_{k=0}^{m}\binom{m}{k}p^{k}q^{n}p^{m-k}&=
2^{m}\sum_{k=0}^{n}\binom{n}{k}q^{k}p^{m}q^{n-k} \\
&=\begin{cases}
    2^{m}n!i^{-n}P_{n}^{(\frac{1+m-n}{2})}\left(\frac{i(T+m-n)}{2};\frac{\pi}{2}\right)p^{m-n} & (m\geq n) \\
    2^{n}m!i^{-m}q^{n-m}P_{m}^{(\frac{1+n-m}{2})}\left(\frac{i(T+n-m)}{2};\frac{\pi}{2}\right)  & (n\geq m)
  \end{cases}.\nonumber
\end{align}
In particular, we have(\cite{HZ}) 
\begin{equation}
\label{eq:theorem1prot}
\sum_{k=0}^{n}\binom{n}{k}p^{k}q^{n}p^{n-k}
=\sum_{k=0}^{n}\binom{n}{k}q^{k}p^{n}q^{n-k}
=n!i^{-n}P_{n}^{(\frac{1}{2})}\left(\frac{iT}{2};\frac{\pi}{2}\right).
\end{equation}
Here $P_{n}^{(\alpha)}(x;\phi)$ is the Meixner-Pollaczek polynomial given by the hypergeometric series
\begin{align}
\label{eq:Meixner-Pollaczek definition}
P_{n}^{(\alpha)}(x;\phi):=\frac{(2\alpha)_{n}}{n!}e^{in\phi}
{_{2}F_1}\left[\begin{matrix}-n,\alpha+ix\\
2\alpha \end{matrix};1-e^{-2i\phi}\right].
\end{align}
\end{thm}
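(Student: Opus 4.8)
The plan is to pass to the joint exponential generating function of the two left-hand sides, observe that with the correct powers of $2$ it telescopes to the single exponential $e^{2(sp+tq)}$ in a completion of $W$, and then recover the coefficients by splitting that exponential along the grading $\deg p=1$, $\deg q=-1$.

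\textbf{Step 1 (generating function).} Put $A_{m,n}:=\sum_{k=0}^m\binom mk p^kq^np^{m-k}$ and $B_{n,m}:=\sum_{k=0}^n\binom nk q^kp^mq^{n-k}$. Treating left- and right-multiplication by $p$ (resp.\ $q$) as commuting operators on $W$ one gets $\sum_{m\ge0}\frac{s^m}{m!}A_{m,n}=e^{sp}q^ne^{sp}$ and $\sum_{n\ge0}\frac{t^n}{n!}B_{n,m}=e^{tq}p^me^{tq}$, whence
\begin{equation*}
\sum_{m,n\ge0}\frac{s^mt^n}{m!\,n!}A_{m,n}=e^{sp}e^{tq}e^{sp},\qquad
\sum_{m,n\ge0}\frac{s^mt^n}{m!\,n!}B_{n,m}=e^{tq}e^{sp}e^{tq}.
\end{equation*}
Since $[sp,2tq]=2st$ is central, the Baker--Campbell--Hausdorff identity gives $e^{X}e^{Y}e^{X}=e^{2X+Y}$ whenever $[X,Y]$ is central; applied with $X=sp,\ Y=2tq$ it yields $e^{sp}e^{2tq}e^{sp}=e^{2(sp+tq)}$, and likewise $e^{tq}e^{2sp}e^{tq}=e^{2(sp+tq)}$. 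Substituting $t\mapsto2t$ in the first double series and $s\mapsto2s$ in the second therefore gives
\begin{equation*}
\sum_{m,n\ge0}\frac{s^mt^n}{m!\,n!}\,2^nA_{m,n}\;=\;e^{2(sp+tq)}\;=\;\sum_{m,n\ge0}\frac{s^mt^n}{m!\,n!}\,2^mB_{n,m},
\end{equation*}
which already yields the first equality of the theorem; write $V_{m,n}$ for the common value.

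\textbf{Step 2 (reading off $V_{m,n}$).} The relation $pq-qp=1$ is homogeneous for $\deg p=1,\ \deg q=-1$, so $W=\bigoplus_{d\in\mathbb Z}W_d$ with $W_0=\mathbb C[T]$ and $W_d=\mathbb C[T]\,p^d$ for $d\ge0$. Writing $e^{2(sp+tq)}=e^{-2st}e^{2sp}e^{2tq}$, collecting the monomials $p^{a+d}q^a$, and using the classical Weyl-algebra facts $p^aq^a=\bigl(\tfrac{T+1}{2}\bigr)_a$ and $p^df(T)=f(T+2d)p^d$, the degree-$d$ part ($d\ge0$) of $e^{2(sp+tq)}$ equals $\frac{(2s)^d}{d!}\,e^{-2st}\,{_2F_1}$-free expression $\frac{(2s)^d}{d!}\,e^{-2st}\,{_1F_1}\!\bigl(\tfrac{T+2d+1}{2};d+1;4st\bigr)\,p^d$. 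On the other hand, for $m=n+d$ the terms of \eqref{eq:theorem1} contribute $(2s)^d\sum_{n\ge0}\frac{(-2ist)^n}{(n+d)!}P_n^{(\frac{1+d}{2})}\!\bigl(\tfrac{i(T+d)}{2};\tfrac{\pi}{2}\bigr)p^d$ to the generating function; substituting $\alpha=\tfrac{1+d}{2}$, $x=\tfrac{i(T+d)}{2}$ into \eqref{eq:Meixner-Pollaczek definition} gives $\alpha+ix=\tfrac{1-T}{2}$ and $2\alpha=d+1$, so that after expanding the ${_2F_1}$, reindexing and summing the resulting exponential series this becomes $\frac{(2s)^d}{d!}\,e^{2st}\,{_1F_1}\!\bigl(\tfrac{1-T}{2};d+1;-4st\bigr)\,p^d$.

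\textbf{Step 3 (Kummer and symmetry).} The two expressions in Step 2 coincide precisely by Kummer's transformation ${_1F_1}(a;c;z)=e^{z}\,{_1F_1}(c-a;c;-z)$ taken with $a=\tfrac{T+2d+1}{2}$, $c=d+1$, $z=4st$, for which $c-a=\tfrac{1-T}{2}$; this proves \eqref{eq:theorem1} for $m\ge n$, and the case $d=0$ is \eqref{eq:theorem1prot}. For $n\ge m$ one applies the anti-automorphism $\sigma$ of $W$ fixing scalars with $\sigma(p)=q$, $\sigma(q)=p$ and $\sigma(xy)=\sigma(y)\sigma(x)$: then $\sigma(T)=T$, $\sigma\bigl(f(T)p^{m-n}\bigr)=q^{m-n}f(T)$, and $\sigma(V_{m,n})=V_{n,m}$, so the $n\ge m$ formula follows from the $m\ge n$ one by exchanging $m$ and $n$. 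The main obstacle is Step 1 --- recognizing that, with the right powers of $2$, the iterated exponentials telescope to $e^{2(sp+tq)}$; after that everything is the grading bookkeeping of Step 2 together with the single classical fact of Step 3. The one point needing a little care is that the grading decomposition of $W$ be compatible with these formal power-series manipulations, which is harmless because for each fixed $d$ only finitely many terms contribute to any given power of $st$.
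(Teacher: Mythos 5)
Your argument is correct, and although it shares the paper's central structural idea, it carries out the second half by a genuinely different route. The paper's engine is the pair of commuting operators $\check{\rm{ad}}(p)=L_{p}+R_{p}$ and $\check{\rm{ad}}(q)=L_{q}+R_{q}$, with $\check{\rm{ad}}(p)^{m}\check{\rm{ad}}(q)^{n}.1$ equal to both of your sums up to the powers of $2$; your Step 1 is exactly the exponential-generating-function form of this (your $e^{sp}Xe^{sp}$ is $e^{s\check{\rm{ad}}(p)}.X$, and the telescoping to $e^{2(sp+tq)}$ is the exponential version of the paper's Lemma 2.1), so the first equality is obtained by essentially the same mechanism. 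Where you diverge is in identifying the common value with the Meixner--Pollaczek polynomial: the paper evaluates each $V_{m,n}$ as a single finite Pochhammer sum $\sum_{k}\binom{n}{k}(-1)^{k}\left(\frac{1-T}{2}\right)_{k}\left(\frac{1+T}{2}+m-n\right)_{n-k}$ via $q^{k}p^{k}=(-1)^{k}\left(\frac{1-T}{2}\right)_{k}$, $p^{l}q^{l}=\left(\frac{1+T}{2}\right)_{l}$ and $p^{l}f(T)=f(T+2l)p^{l}$, and then applies a terminating ${_2F_1}$ transformation at argument $2$ ((2.3.14) of \cite{AAR}); you instead normal-order the full exponential, $e^{2(sp+tq)}=e^{-2st}e^{2sp}e^{2tq}$, sum over all $n$ at fixed degree $d=m-n$, and invoke Kummer's ${_1F_1}(a;c;z)=e^{z}{_1F_1}(c-a;c;-z)$, which is the confluent avatar of the same transformation. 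Your version buys a clean conceptual statement---the entire family of identities in a fixed degree is one Kummer identity---at the cost of working in a completion of $W$ and checking compatibility of the $\mathbf{Z}$-grading with the formal series (which, as you note, is harmless since each coefficient of $s^{m}t^{n}$ is a finite sum); the paper's version is entirely finite and more elementary. Two small points: the phrase ``${_2F_1}$-free expression'' in your Step 2 is evidently a typographical slip, and it is worth saying explicitly that the two ${_1F_1}$'s with parameter $T$ are compared coefficient-by-coefficient in $st$ as identities in the commutative subalgebra $\mathbf{C}[T]$ of each graded piece $\mathbf{C}[T]p^{d}$, which is exactly why Kummer's scalar identity applies verbatim.
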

\begin{thm}
\label{thm:theorem2}
Let $T_{m,n}$ be the sum of all possible terms containing $m$ factors of $p$ and $n$ factors of $q$. We have
\begin{align}
\label{eq:theorem2}
T_{m,n}= \begin{cases}
    \frac{n!}{2^{n}}\binom{m+n}{n}i^{-n}P_{n}^{(\frac{1+m-n}{2})}\left(\frac{i(T+m-n)}{2};\frac{\pi}{2}\right)p^{m-n} & (m\geq n) \\
    \frac{m!}{2^{m}}\binom{m+n}{m}i^{-m}q^{n-m}P_{m}^{(\frac{1+n-m}{2})}\left(\frac{i(T+n-m)}{2};\frac{\pi}{2}\right)  & (n\geq m)
  \end{cases}.
\end{align}
In particular, we have(\cite{K},\cite{HZ},\cite{FW})
\begin{equation}
\label{eq:original}
T_{n}:=T_{n,n}=\frac{n!}{2^{n}}\binom{2n}{n}i^{-n}P_{n}^{(\frac{1}{2})}\left(\frac{iT}{2};\frac{\pi}{2}\right).
\end{equation}
\end{thm}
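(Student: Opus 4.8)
\emph{Proof proposal.} The strategy is to reduce Theorem~\ref{thm:theorem2} to Theorem~\ref{thm:theorem1} by means of a single generating-function identity. Write $S_{m,n}:=\sum_{k=0}^{m}\binom{m}{k}p^{k}q^{n}p^{m-k}$, so that the left-hand side of \eqref{eq:theorem1} is exactly $2^{n}S_{m,n}$. The key step I will prove is the algebraic identity
\begin{equation*}
T_{m,n}=\frac{1}{2^{m}}\binom{m+n}{n}\,S_{m,n}.
\end{equation*}
Granting this, \eqref{eq:theorem2} is immediate: for $m\ge n$ Theorem~\ref{thm:theorem1} gives $S_{m,n}=2^{m-n}n!\,i^{-n}P_{n}^{(\frac{1+m-n}{2})}\!\left(\tfrac{i(T+m-n)}{2};\tfrac{\pi}{2}\right)p^{m-n}$, and multiplying by $2^{-m}\binom{m+n}{n}$ produces the first line of \eqref{eq:theorem2}; the case $n\ge m$ is the mirror statement (using $\binom{m+n}{n}=\binom{m+n}{m}$), and $m=n$ recovers \eqref{eq:original} from \eqref{eq:theorem1prot}.

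To prove the identity I pass to the ring $W[[s,t]]$ of formal power series in two central indeterminates $s,t$. Since every monomial in the expansion of $(sp+tq)^{N}$ is $s^{m}t^{n}$ times one of the $\binom{m+n}{n}$ words counted by $T_{m,n}$, we have $(sp+tq)^{N}=\sum_{m+n=N}s^{m}t^{n}T_{m,n}$, hence
\begin{equation*}
e^{\,sp+tq}=\sum_{m,n\ge 0}\frac{s^{m}t^{n}}{(m+n)!}\,T_{m,n}.
\end{equation*}
On the other hand, expanding the three exponentials and collecting powers one finds
\begin{equation*}
e^{\frac{s}{2}p}\,e^{tq}\,e^{\frac{s}{2}p}=\sum_{m,n\ge 0}\frac{s^{m}t^{n}}{2^{m}\,m!\,n!}\,S_{m,n}.
\end{equation*}
Thus it suffices to prove $e^{\,sp+tq}=e^{\frac{s}{2}p}\,e^{tq}\,e^{\frac{s}{2}p}$, and then comparing coefficients of $s^{m}t^{n}$ (and multiplying by $(m+n)!$) yields exactly $T_{m,n}=2^{-m}\binom{m+n}{n}S_{m,n}$.

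This last relation is the disentangling (Baker--Campbell--Hausdorff) formula for the Heisenberg commutator. Because $[p,q]=1$ is central, for $A,B$ each a scalar multiple of $p$ or of $q$ one has $e^{A}Be^{-A}=B+[A,B]$, hence $e^{A}e^{B}=e^{B}e^{A}e^{[A,B]}$ and $e^{A+B}=e^{A}e^{B}e^{-\frac12[A,B]}$; applying the first with $A=tq,\ B=\frac{s}{2}p$ gives $e^{\frac{s}{2}p}e^{tq}e^{\frac{s}{2}p}=e^{sp}e^{tq}e^{-st/2}$, and applying the second with $A=sp,\ B=tq$ gives $e^{sp+tq}=e^{sp}e^{tq}e^{-st/2}$, so the two sides coincide. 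I expect the only point that needs care is the justification of these exponential manipulations: $W$ carries no norm, so $e^{sp}$, $e^{tq}$, $e^{sp+tq}$ must be read in $W[[s,t]]$ — where each coefficient of $s^{m}t^{n}$ is a finite $W$-linear combination, so the series are well defined — and the two exponential identities must be verified there, which is routine (differentiate in $s$, or induct on the total degree in $s,t$). Everything else is bookkeeping with binomial coefficients, powers of $i$, and the shift $T+m-n$ versus $T+n-m$ in the two cases of \eqref{eq:theorem2}.
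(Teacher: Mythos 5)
Your proposal is correct, and the reduction it effects is the same one the paper uses --- namely, everything comes down to the proportionality $T_{m,n}=\tfrac{1}{2^{m}}\binom{m+n}{n}S_{m,n}$ with $S_{m,n}=\sum_{k}\binom{m}{k}p^{k}q^{n}p^{m-k}$, after which Theorem \ref{thm:theorem1} finishes the job --- but you reach that proportionality by a genuinely different mechanism. The paper extracts the coefficient of $t_{1}^{m}t_{2}^{n}$ from the identity $(t_{1}\check{\mathrm{ad}}(p)+t_{2}\check{\mathrm{ad}}(q))^{N}.1=2^{N}(t_{1}p+t_{2}q)^{N}$ of Lemma \ref{lem:primitive key lemma} to get $T_{m,n}=\tfrac{1}{2^{m+n}}\binom{m+n}{m}\check{\mathrm{ad}}(p)^{m}\check{\mathrm{ad}}(q)^{n}.1$ (Proposition \ref{prop:expression of T_{m,n}}), and then identifies $\check{\mathrm{ad}}(p)^{m}\check{\mathrm{ad}}(q)^{n}.1$ with $2^{n}S_{m,n}$ via Proposition \ref{prop:basic expan}; this is a purely finite, binomial-theorem computation with the commuting operators $L_{p}+R_{p}$ and $L_{q}+R_{q}$, and it reuses exactly the machinery already set up for Theorem \ref{thm:theorem1}. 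You instead pass to $W[[s,t]]$ and prove $e^{sp+tq}=e^{\frac{s}{2}p}e^{tq}e^{\frac{s}{2}p}$ by two central-commutator BCH identities; your coefficient bookkeeping on both sides is right, and both disentangling steps are legitimate formal identities (each requires only that $[p,q]$ be central, just as the paper's argument does), so the proof goes through once one accepts the routine verification of the exponential identities in formal power series that you flag. What your route buys is independence from the $\check{\mathrm{ad}}=L+R$ formalism and an explicit link to the classical disentangling theorems for the Heisenberg algebra (closer in spirit to Koornwinder's representation-theoretic viewpoint); what it costs is the extra apparatus of formal exponentials and the need to actually prove the two BCH lemmas, where the paper gets the same constant from a one-line coefficient extraction. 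One small economy you could add: for $n\geq m$ you do not need a separate mirror argument, since Theorem \ref{thm:theorem1} already asserts $2^{n}S_{m,n}=2^{m}\sum_{k}\binom{n}{k}q^{k}p^{m}q^{n-k}$, so your single identity for $T_{m,n}$ covers both cases.
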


The formula (\ref{eq:original}) for $T_{n}$ was first observed by Bender, Mead and Pinsky(\cite{BMP}), and proved by Koorwinder(\cite{K}).
The idea of the proof in \cite{K} is to consider the irreducible unitary representations of the Heisenberg group and some analysis for special functions. 
Moreover, a combinatorial proof was given by Hamdi and Zeng(\cite{HZ}). 
They used the rook placement interpretation of the normal ordering of the Weyl algebra 
and gave also a proof of (\ref{eq:theorem1prot}), which was first observed by \cite{BD}. 
Our results extend these to general $m$ and $n$.

The proofs given in this paper are much simpler than the investigations(\cite{K}, \cite{BD}). 
Actually, we only use some basic properties of the Weyl algebra and a certain transformation formula of the hypergeometric function. 
Our proofs clarify the reason why (\ref{eq:theorem1prot}) and (\ref{eq:original}) are equal up to constant, which is not explained in \cite{HZ}.
\section{Proof of Theorem \ref{thm:theorem1}}
The operations $L_{A},R_{A} \in \mathrm{End}\,_{\mathbf{C}}(W)$ are respectively left and right multiplications, that is, 
\begin{equation}
L_{A}.X:=AX,\,\,\,\,\,R_{A}.X:=XA, \,\,\,\,\,(A,X \in W).
\end{equation}
We introduce some useful operators(\cite{W}). 
\begin{equation}
\check{\rm{ad}}(A):=L_{A}+R_{A}.
\end{equation}
We remark that $L, R:W \rightarrow \mathrm{End}\,_{\mathbf{C}}(W)$ are linear, hence $\check{\rm{ad}}$ is also linear. 
In addition, since $\check{\rm{ad}}(A)^{N}.1=2^{N}A^{N}$, we obtain the following lemma immediately.
\begin{lem}
\label{lem:primitive key lemma}
Let $t_{1},\cdots,t_{n}$ be indeterminates. For any $N \in \mathbf{Z}_{\geq 0}$, we obtain
\begin{equation}
\label{eq:primitive key lemma}
\left\{\sum_{k=1}^{n}t_{k}\check{\rm{ad}}(A_{k})\right\}^{N}\!\!\!\!.1=2^{N}\left\{\sum_{k=1}^{n}t_{k}A_{k}\right\}^{N}.
\end{equation}
In particular, we have 
\begin{equation}
\label{eq:primitive key lemma2}
(t_{1}\check{\rm{ad}}(p)+t_{2}\check{\rm{ad}}(q))^{N}.1=2^{N}(t_{1}p+t_{2}q)^{N}.
\end{equation}
\end{lem}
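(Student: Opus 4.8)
The plan is to recognize that the whole multi-term operator $\sum_{k=1}^{n} t_{k}\check{\rm{ad}}(A_{k})$ is again of the form $\check{\rm{ad}}(B)$ for a suitable single element $B$, and then to quote the identity $\check{\rm{ad}}(A)^{N}.1 = 2^{N}A^{N}$ recorded just above the statement. Since $L$ and $R$ are $\mathbf{C}$-linear, so is $\check{\rm{ad}} = L + R$, and this linearity persists when we allow the commuting formal parameters $t_{1},\dots,t_{n}$ as coefficients, i.e.\ when we work in $W[t_{1},\dots,t_{n}]$ and $\mathrm{End}_{\mathbf{C}[t_{1},\dots,t_{n}]}$ of it. Hence
\[
\sum_{k=1}^{n} t_{k}\check{\rm{ad}}(A_{k}) = \check{\rm{ad}}\!\left(\sum_{k=1}^{n} t_{k}A_{k}\right),
\]
and if we set $B := \sum_{k=1}^{n} t_{k}A_{k}$, the left-hand side of (\ref{eq:primitive key lemma}) is exactly $\check{\rm{ad}}(B)^{N}.1$.

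It then remains to verify the single-generator identity $\check{\rm{ad}}(B)^{N}.1 = 2^{N}B^{N}$ in the form needed here, which I would record via the binomial theorem. The key point is that left and right multiplications always commute: $L_{B}R_{B}.X = BXB = R_{B}L_{B}.X$ for all $X$, so $[L_{B},R_{B}] = 0$. Therefore $\check{\rm{ad}}(B)^{N} = (L_{B}+R_{B})^{N} = \sum_{j=0}^{N}\binom{N}{j}L_{B}^{\,j}R_{B}^{\,N-j}$, and applying this to $1$ gives $L_{B}^{\,j}R_{B}^{\,N-j}.1 = B^{j}B^{N-j} = B^{N}$ for every $j$, whence $\check{\rm{ad}}(B)^{N}.1 = \bigl(\sum_{j=0}^{N}\binom{N}{j}\bigr)B^{N} = 2^{N}B^{N}$. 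Substituting back $B = \sum_{k=1}^{n} t_{k}A_{k}$ yields (\ref{eq:primitive key lemma}), and the special case (\ref{eq:primitive key lemma2}) is just $n=2$, $A_{1}=p$, $A_{2}=q$.

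I do not expect any genuine obstacle: the content of the lemma is precisely the bookkeeping observation that the $t_{k}$ can be carried along as scalars, so that the apparently multivariate sum collapses to a single $\check{\rm{ad}}$. The one point worth stating explicitly is that the binomial expansion of $\check{\rm{ad}}(B)^{N}$ is legitimate only because $L_{B}$ and $R_{B}$ commute; this commutativity is exactly what later makes the symmetric combinations $\sum_{k}\binom{m}{k}p^{k}q^{n}p^{m-k}$ so transparent, since applying Lemma \ref{lem:primitive key lemma} and reading off coefficients of $t_{1}^{m}t_{2}^{n}$ turns those sums into evaluations of $\check{\rm{ad}}$-operators that can be diagonalized explicitly.
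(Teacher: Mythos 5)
Your proof is correct and is essentially the paper's own argument: the paper likewise derives the lemma "immediately" from the linearity of $\check{\rm{ad}}$ (so that $\sum_k t_k\check{\rm{ad}}(A_k)=\check{\rm{ad}}(\sum_k t_k A_k)$) together with the identity $\check{\rm{ad}}(A)^N.1=2^N A^N$. Your explicit verification of that identity via the commutativity of $L_B$ and $R_B$ and the binomial theorem is a fine way to justify the step the paper takes for granted.
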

\begin{rem}
When $N=n$ in Lemma\,\ref{lem:primitive key lemma}, comparing the coefficients of $t_{1}\cdots t_{n}$ on both sides of the (\ref{eq:primitive key lemma}), we obtain the following formula immediately. 
\begin{equation}
F(\check{\rm{ad}}(\underline{A}_{n})).1=2^{n}F(\underline{A}_{n}).
\end{equation}
Here, $\underline{A}_{n}:=\,(A_{1},\cdots,A_{n}),\,\,\check{\rm{ad}}(\underline{A}_{n}):=\,(\check{\rm{ad}}(A_{1}),\cdots,\check{\rm{ad}}(A_{n}))$ and 
\begin{align}
F(\underline{A}_{n}):=\sum_{\sigma \in \mathfrak{S}_{n}}A_{\sigma(1)}\cdots A_{\sigma(n)},\,\,\,\,
F(\check{\rm{ad}}(\underline{A}_{n})):=\sum_{\sigma \in \mathfrak{S}_{n}}\check{\rm{ad}}(A_{\sigma(1)})\cdots \check{\rm{ad}}(A_{\sigma(n)}).
\end{align}
\end{rem}
\begin{lem}
\label{prop:commutant}
The operators $\check{\rm{ad}}(p)$ and $\check{\rm{ad}}(q)$ are commutative.
\end{lem}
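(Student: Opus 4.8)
The idea is simply to expand the commutator $[\check{\rm{ad}}(p),\check{\rm{ad}}(q)]$ in terms of the four building blocks $L_p,L_q,R_p,R_q$ and observe massive cancellation. Since $\check{\rm{ad}}(p)=L_p+R_p$ and $\check{\rm{ad}}(q)=L_q+R_q$, bilinearity of the commutator gives
\begin{equation}
[\check{\rm{ad}}(p),\check{\rm{ad}}(q)]=[L_p,L_q]+[L_p,R_q]+[R_p,L_q]+[R_p,R_q].
\end{equation}
First I would record the elementary fact that left and right multiplications always commute: $L_AR_B.X=AXB=R_BL_A.X$ for all $A,B,X\in W$, so the two mixed terms $[L_p,R_q]$ and $[R_p,L_q]$ vanish identically.

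Next I would compute the two remaining terms directly from the definitions. For the left terms, $L_pL_q.X=pqX$ and $L_qL_p.X=qpX$, so $[L_p,L_q]=L_{pq-qp}=L_{[p,q]}=L_1=\mathrm{id}$. For the right terms one must be careful that right multiplication reverses order: $R_pR_q.X=Xqp$ and $R_qR_p.X=Xpq$, hence $[R_p,R_q].X=X(qp-pq)=-X[p,q]=-X$, i.e.\ $[R_p,R_q]=-\mathrm{id}$. Substituting these into the displayed expansion yields $[\check{\rm{ad}}(p),\check{\rm{ad}}(q)]=\mathrm{id}+0+0-\mathrm{id}=0$, which is exactly the claim.

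The argument is essentially a one-line calculation, so there is no real obstacle; the only point demanding a moment's care is the sign in $[R_p,R_q]$, coming from the fact that composing right multiplications multiplies the arguments in the opposite order. It may be worth remarking that this is the precise mechanism by which the $[p,q]=1$ relation cancels against itself, and it is the structural reason—alluded to in the introduction—why the $p$- and $q$-sides of Theorem~\ref{thm:theorem1} produce the same Meixner--Pollaczek polynomial up to the stated constants.
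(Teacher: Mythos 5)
Your proposal is correct and is essentially the paper's own argument: both expand $[\check{\rm{ad}}(p),\check{\rm{ad}}(q)]$ into the four commutators of $L_p,L_q,R_p,R_q$, kill the mixed terms because left and right multiplications commute, and cancel $[L_p,L_q]=L_{[p,q]}$ against $[R_p,R_q]=-R_{[p,q]}$ (the paper phrases the sign via ``$R$ is an anti-homomorphism,'' you verify it by direct computation). No gaps; the sign point you flag is exactly the one the paper's phrasing encodes.
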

\begin{proof}
Obviously $L_{A}$ and $R_{B}$ are commutative. 
Since $L$ is a homomorphism and $R$ is an anti-homomorphism, we have 
\begin{align}
[\check{\rm{ad}}(p),\check{\rm{ad}}(q)]
=[L_{p}+R_{p},L_{q}+R_{q}]
=[L_{p},L_{q}]+[R_{p},R_{q}]
=L_{pq-qp}-R_{pq-qp}=0. \nonumber
\end{align}
\end{proof}
\begin{prop}
\label{prop:basic expan}
\begin{equation}
\label{eq:basic expan}
\check{\rm{ad}}(p)^{m}\check{\rm{ad}}(q)^{n}.1=
2^{n}\sum_{k=0}^{m}\binom{m}{k}p^{k}q^{n}p^{m-k}
=2^{m}\sum_{k=0}^{n}\binom{n}{k}q^{k}p^{m}q^{n-k}. 
\end{equation}
\begin{proof}
Since $L_{A}$ and $R_{B}$ are commutative, $L$ is a homomorphism and $R$ is an anti-homomorphism, we obtain 
\begin{align}
\check{\rm{ad}}(p)^{m}\check{\rm{ad}}(q)^{n}.1
=(L_{p}+R_{p})^{m}.2^{n}q^{n}
=2^{n}\sum_{k=0}^{m}\binom{m}{k}L_{p^{k}}R_{p^{m-k}}.q^{n}
=2^{n}\sum_{k=0}^{m}\binom{m}{k}p^{k}q^{n}q^{m-k}. \nonumber
\end{align}
On the other hand, since $\check{\rm{ad}}(p)$ and $\check{\rm{ad}}(q)$ are commutative, we have
\begin{equation}
\check{\rm{ad}}(p)^{m}\check{\rm{ad}}(q)^{n}=\check{\rm{ad}}(q)^{n}\check{\rm{ad}}(p)^{m}. \nonumber 
\end{equation}
Hence, the second equality of (\ref{eq:basic expan}) can be proved in the same way.
\end{proof}
\begin{rem}
Wakayama(\cite{W}) has constructed the oscillator representation of the simple Lie algebra $\mathfrak{sl}_{2}$ by $\check{\rm{ad}}$ and $\rm{ad}$ in $\mathrm{End}\,_{\mathbf{C}}(W)$ and 
then, proves that $\check{\rm{ad}}(p)^{n}\check{\rm{ad}}(q)^{n}.1$ satisfies the difference equation of the Meixner-Pollaczek polynomials. 
\end{rem}
\end{prop}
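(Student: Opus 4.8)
The plan is to establish the two equalities in turn, the first by unwinding the definition $\check{\rm{ad}}(p)=L_{p}+R_{p}$ and the second by invoking the commutativity from Lemma~\ref{prop:commutant}. First I would strip off the $\check{\rm{ad}}(q)^{n}$ factor using the identity $\check{\rm{ad}}(A)^{N}.1=2^{N}A^{N}$ noted before Lemma~\ref{lem:primitive key lemma}: taking $A=q$ and $N=n$ gives $\check{\rm{ad}}(q)^{n}.1=2^{n}q^{n}$, so the left-hand side of (\ref{eq:basic expan}) becomes $\check{\rm{ad}}(p)^{m}.(2^{n}q^{n})=2^{n}(L_{p}+R_{p})^{m}.q^{n}$.

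The central step is then to expand $(L_{p}+R_{p})^{m}$ by the ordinary binomial theorem. Here lies the only real subtlety, mild as it is: the binomial theorem is legitimate precisely because $L_{p}$ and $R_{p}$ commute, a fact that holds because left multiplications and right multiplications always commute in $\mathrm{End}\,_{\mathbf{C}}(W)$. This yields $(L_{p}+R_{p})^{m}=\sum_{k=0}^{m}\binom{m}{k}L_{p}^{k}R_{p}^{m-k}$. Since $L$ is a homomorphism and $R$ an anti-homomorphism, one has $L_{p}^{k}=L_{p^{k}}$ and $R_{p}^{m-k}=R_{p^{m-k}}$, whence $L_{p^{k}}R_{p^{m-k}}.q^{n}=p^{k}q^{n}p^{m-k}$. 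Assembling these gives the first claimed expression $2^{n}\sum_{k=0}^{m}\binom{m}{k}p^{k}q^{n}p^{m-k}$.

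Finally, for the second equality I would appeal to Lemma~\ref{prop:commutant}, which guarantees $\check{\rm{ad}}(p)^{m}\check{\rm{ad}}(q)^{n}=\check{\rm{ad}}(q)^{n}\check{\rm{ad}}(p)^{m}$. Applying $\check{\rm{ad}}(p)^{m}.1=2^{m}p^{m}$ first and then repeating the binomial expansion for $\check{\rm{ad}}(q)^{n}=(L_{q}+R_{q})^{n}$ acting on $p^{m}$ produces $2^{m}\sum_{k=0}^{n}\binom{n}{k}q^{k}p^{m}q^{n-k}$, which completes the argument. The proof contains no genuine obstacle; the work is entirely bookkeeping with the homomorphism and anti-homomorphism properties, and the conceptual payoff is that Lemma~\ref{prop:commutant} is exactly what forces the two binomial sums to coincide.
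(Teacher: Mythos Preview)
Your proposal is correct and follows essentially the same approach as the paper's own proof: first reduce $\check{\rm{ad}}(q)^{n}.1$ to $2^{n}q^{n}$, then expand $(L_{p}+R_{p})^{m}$ by the binomial theorem (justified by the commutativity of left and right multiplications), and finally invoke Lemma~\ref{prop:commutant} to handle the second equality by symmetry. Your write-up is in fact slightly more explicit than the paper's about why the binomial expansion is valid, and it avoids a typographical slip present in the paper (the paper's last displayed term reads $p^{k}q^{n}q^{m-k}$ where $p^{k}q^{n}p^{m-k}$ is intended).
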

Since $T=pq+qp$ and $pq-qp=1$, we have
\begin{equation}
pq=\frac{T+1}{2},\,\,\,\,\,qp=\frac{T-1}{2}.
\end{equation}
The proof of the following lemma is straightforward.
\begin{lem}
\label{lem:basic properties}
{\rm{(1)}}\,
Let $f(T) \in \mathbf{C}[T], l \in \mathbf{Z}_{\geq 0}$. We have
\begin{equation}
\label{eq:commutation relation1}
p^{l}f(T)=f(T+2l)p^{l},\,\,\,\,\,q^{l}f(T)=f(T-2l)q^{l}.
\end{equation}
{\rm{(2)}}\,
For any $l \in \mathbf{Z}_{\geq 0}$, we have
\begin{equation}
\label{eq:factor1}
p^{l}q^{l}=\left(\frac{1+T}{2}\right)_{l},\,\,\,\,\,q^{l}p^{l}=(-1)^{l}\left(\frac{1-T}{2}\right)_{l}.
\end{equation}
Here, $(x)_{l}:=x(x+1)\cdots(x+l-1),\,(x)_{0}:=1$.
\end{lem}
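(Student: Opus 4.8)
The plan is to derive both parts from the single commutation relation $[p,T]=2p$ (together with its mirror image $[q,T]=-2q$) by two short inductions.

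For (1), start from $[p,q]=1$ and the Leibniz rule: since $[p,pq]=p[p,q]$ and $[p,qp]=[p,q]p$, we get
\[
[p,T]=[p,pq+qp]=p[p,q]+[p,q]p=2p,
\]
that is, $pT=(T+2)p$. Iterating gives $pT^{k}=(T+2)^{k}p$ for all $k\in\mathbf{Z}_{\geq 0}$, hence by linearity $pf(T)=f(T+2)p$ for every $f\in\mathbf{C}[T]$. Now I would induct on $l$: writing $p^{l}f(T)=p^{l-1}\bigl(pf(T)\bigr)=p^{l-1}f(T+2)p$ and applying the inductive hypothesis to the polynomial $g(T):=f(T+2)$ yields $p^{l}f(T)=g\bigl(T+2(l-1)\bigr)p^{l-1}p=f(T+2l)p^{l}$. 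The identity for $q$ is obtained verbatim, replacing $[p,T]=2p$ by $[q,T]=[q,pq+qp]=q[q,p]+[q,p]q=-2q$, so that $qf(T)=f(T-2)q$ and $q^{l}f(T)=f(T-2l)q^{l}$.

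For (2), I would induct on $l$; the case $l=0$ is trivial since $p^{0}q^{0}=1=\bigl(\tfrac{1+T}{2}\bigr)_{0}$ and likewise for $q^{0}p^{0}$. For the inductive step, insert $pq=\tfrac{T+1}{2}$ and push it past $p^{l}$ using part (1):
\[
p^{l+1}q^{l+1}=p^{l}(pq)q^{l}=p^{l}\,\frac{T+1}{2}\,q^{l}=\frac{T+2l+1}{2}\,p^{l}q^{l}=\left(\frac{1+T}{2}+l\right)\left(\frac{1+T}{2}\right)_{l}=\left(\frac{1+T}{2}\right)_{l+1},
\]
where one uses that polynomials in $T$ commute and the defining recursion $(x)_{l+1}=(x+l)(x)_{l}$. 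The same computation with $qp=\tfrac{T-1}{2}$ gives
\[
q^{l+1}p^{l+1}=q^{l}(qp)p^{l}=q^{l}\,\frac{T-1}{2}\,p^{l}=\frac{T-2l-1}{2}\,q^{l}p^{l}=-\left(\frac{1-T}{2}+l\right)(-1)^{l}\left(\frac{1-T}{2}\right)_{l}=(-1)^{l+1}\left(\frac{1-T}{2}\right)_{l+1}.
\]

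There is no essential obstacle; the only place demanding care is the bookkeeping of shifts and signs — in (1) the argument of $f$ shifts by $+2l$ for $p$ but by $-2l$ for $q$, and in the second formula of (2) one must recognize $\tfrac{T-2l-1}{2}=-\bigl(\tfrac{1-T}{2}+l\bigr)$ so that the stray $-1$ is absorbed into $(-1)^{l}$ to produce $(-1)^{l+1}$.
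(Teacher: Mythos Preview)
Your argument is correct in every detail; the paper itself omits the proof entirely, stating only that it ``is straightforward,'' and the short inductions you give --- on the degree of $f$ and then on $l$ for part~(1), and on $l$ using part~(1) together with $pq=\tfrac{T+1}{2}$, $qp=\tfrac{T-1}{2}$ for part~(2) --- are exactly the routine verification the author has in mind.
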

\begin{prop}
\begin{equation}
\label{eq:generating function}
n!i^{-n}P_{n}^{(\alpha)}\left(\frac{ix}{2};\frac{\pi}{2}\right)
=\sum_{k=0}^{n}\binom{n}{k}(-1)^{k}\left(\alpha-\frac{x}{2}\right)_{k}\left(\alpha+\frac{x}{2}\right)_{n-k}.
\end{equation}
\end{prop}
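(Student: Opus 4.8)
The plan is to start from the hypergeometric definition (\ref{eq:Meixner-Pollaczek definition}), reduce the left-hand side to a terminating Gauss series at argument $2$, and then eliminate that series by an elementary manipulation. Substituting $\phi=\frac{\pi}{2}$ and replacing $x$ by $\frac{ix}{2}$ in (\ref{eq:Meixner-Pollaczek definition}), and using $e^{in\pi/2}=i^{n}$, $1-e^{-i\pi}=2$, and $\alpha+i\cdot\frac{ix}{2}=\alpha-\frac{x}{2}$, one gets
\[
n!\,i^{-n}P_{n}^{(\alpha)}\!\left(\tfrac{ix}{2};\tfrac{\pi}{2}\right)=(2\alpha)_{n}\,{}_{2}F_{1}\!\left[\begin{matrix}-n,\ \alpha-\tfrac{x}{2}\\ 2\alpha\end{matrix};2\right].
\]
Hence it suffices to prove the hypergeometric identity
\[
(c)_{n}\,{}_{2}F_{1}\!\left[\begin{matrix}-n,\ b\\ c\end{matrix};2\right]=\sum_{k=0}^{n}\binom{n}{k}(-1)^{k}(b)_{k}(c-b)_{n-k}
\]
and then specialize $b=\alpha-\frac{x}{2}$, $c=2\alpha$, so that $c-b=\alpha+\frac{x}{2}$ and the right-hand side becomes the desired sum.

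To establish this last identity I would expand $2^{k}=\sum_{j=0}^{k}\binom{k}{j}$ inside the hypergeometric series, interchange the two finite sums (the range $0\le j\le k\le n$ becoming $0\le i,j$ with $k=i+j\le n$), and split the Pochhammer symbols via $(y)_{i+j}=(y)_{j}(y+j)_{i}$. The inner sum in $i$ is then a Chu--Vandermonde sum, ${}_{2}F_{1}(j-n,\,b+j;\,c+j;\,1)=\frac{(c-b)_{n-j}}{(c+j)_{n-j}}$, and since $(c+j)_{n-j}=\frac{(c)_{n}}{(c)_{j}}$ all denominators cancel; rewriting $\frac{(-n)_{j}}{j!}=(-1)^{j}\binom{n}{j}$ leaves exactly the claimed convolution.

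I expect the only delicate point to be the index bookkeeping after interchanging the sums and checking that the inner ${}_{2}F_{1}$ genuinely terminates, so that Chu--Vandermonde applies with no convergence hypothesis --- which it does, its top parameter $j-n$ being a non-positive integer. Everything else is routine Pochhammer algebra. An essentially equivalent alternative would be to compare the generating function $\sum_{n}P_{n}^{(\alpha)}(x;\phi)t^{n}=(1-e^{i\phi}t)^{-\alpha+ix}(1-e^{-i\phi}t)^{-\alpha-ix}$, evaluated at $\phi=\frac{\pi}{2}$, $x\mapsto\frac{ix}{2}$, $t\mapsto\frac{t}{i}$, with the Cauchy product $\bigl(\sum_{k}(\alpha-\tfrac{x}{2})_{k}\tfrac{(-t)^{k}}{k!}\bigr)\bigl(\sum_{k}(\alpha+\tfrac{x}{2})_{k}\tfrac{t^{k}}{k!}\bigr)$ of the two binomial series; but since the paper takes the hypergeometric series as the definition, the transformation-based argument is more in keeping with the stated method.
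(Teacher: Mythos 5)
Your proof is correct, and its first step --- specializing (\ref{eq:Meixner-Pollaczek definition}) at $\phi=\frac{\pi}{2}$, $x\mapsto\frac{ix}{2}$ to obtain $n!\,i^{-n}P_{n}^{(\alpha)}\bigl(\frac{ix}{2};\frac{\pi}{2}\bigr)=(2\alpha)_{n}\,{}_{2}F_{1}\bigl[-n,\alpha-\frac{x}{2};2\alpha;2\bigr]$ --- is exactly the paper's. The two arguments diverge only in how they dispose of the terminating ${}_{2}F_{1}$ at argument $2$: the paper simply cites the terminating Pfaff transformation, formula (2.3.14) of \cite{AAR}, to rewrite it as $\bigl(\alpha+\frac{x}{2}\bigr)_{n}\,{}_{2}F_{1}\bigl[-n,\alpha-\frac{x}{2};-n-\alpha-\frac{x}{2}+1;-1\bigr]$, whose $k$-th term becomes $\binom{n}{k}(-1)^{k}\bigl(\alpha-\frac{x}{2}\bigr)_{k}\bigl(\alpha+\frac{x}{2}\bigr)_{n-k}$ after the standard rewriting $(1-n-c)_{k}=(-1)^{k}(c)_{n}/(c)_{n-k}$, whereas you rederive the equivalent identity from scratch by expanding $2^{k}=\sum_{j}\binom{k}{j}$, interchanging the finite sums, and evaluating the inner series by Chu--Vandermonde. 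Your route buys self-containedness (only Chu--Vandermonde is assumed, and termination makes every step unconditional), at the cost of some index bookkeeping; the paper's is a one-line citation. The bookkeeping you flag as the delicate point does go through: with $k=i+j$ one has $\binom{k}{j}/k!=1/(i!\,j!)$, the inner sum is ${}_{2}F_{1}(j-n,\,b+j;\,c+j;\,1)=(c-b)_{n-j}/(c+j)_{n-j}$, and $(c+j)_{n-j}=(c)_{n}/(c)_{j}$ cancels the remaining denominator, leaving $\sum_{j}(-1)^{j}\binom{n}{j}(b)_{j}(c-b)_{n-j}$ as claimed.
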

\begin{proof}
It follows from the formula (2.3.14) in \cite{AAR} that
\begin{equation}
\label{eq:generating function prot}
(LHS)=(2\alpha)_{n}{_{2}F_1}\left[\begin{matrix}-n,\alpha-\frac{x}{2}\\2\alpha \end{matrix};2\right]
=\left(\alpha+\frac{x}{2}\right)_{n}{_{2}F_1}\left[\begin{matrix}-n,\alpha-\frac{x}{2}\\-n-\alpha-\frac{x}{2}+1 \end{matrix};-1\right]
=(RHS).\nonumber
\end{equation}
\begin{rem}
One may also prove this proposition using the generating function for Meixner-Pollaczek polynomials.
\end{rem}
\end{proof}
We now prove Theorem \ref{thm:theorem1} as follows. 
If $m\geq n$, 
\begin{align}
2^{m}\sum_{k=0}^{n}\binom{n}{k}q^{k}p^{m}q^{n-k}
&=2^{m}\sum_{k=0}^{n}\binom{n}{k}q^{k}p^{k}p^{m-n}p^{n-k}q^{n-k} \nonumber \\
&=2^{m}\sum_{k=0}^{n}\binom{n}{k}(-1)^{k}\left(\frac{1-T}{2}\right)_{k}p^{m-n}\left(\frac{1+T}{2}\right)_{n-k} \nonumber \\
&=2^{m}\sum_{k=0}^{n}\binom{n}{k}(-1)^{k}\left(\frac{1-T}{2}\right)_{k}\left(\frac{1+T}{2}+m-n\right)_{n-k}p^{m-n} \nonumber \\
&=2^{m}n!i^{-n}P_{n}^{\left(\frac{1+m-n}{2}\right)}\left(\frac{i(T+m-n)}{2};\frac{\pi}{2}\right)p^{m-n}. \nonumber
\end{align}
The second equality follows from (\ref{eq:factor1}), the third from (\ref{eq:commutation relation1}) and the fourth from (\ref{eq:generating function}).
By Proposition\,\ref{prop:basic expan}, the case of $n\geq m$ can be proved in the same way.

\section{Proof of Theorem \ref{thm:theorem2}}
Comparing the coefficients of $t_{1}^{m}t_{2}^{n}$ on both sides in (\ref{eq:primitive key lemma2}) for $N=m+n$, one obtain the key Proposition.
\begin{prop}
\label{prop:expression of T_{m,n}}
For any $m,n \in \mathbf{N}$, we have
\begin{equation}
\label{eq:expression of T_{m,n}}
T_{m,n}=\frac{1}{2^{m+n}}\frac{(m+n)!}{m!n!}\check{\rm{ad}}(p)^{m}\check{\rm{ad}}(q)^{n}.1.
\end{equation}
\end{prop}
Theorem\,\ref{thm:theorem2} follows immediately from (\ref{eq:expression of T_{m,n}}), (\ref{eq:basic expan}) and (\ref{eq:theorem1}).
\begin{rem}
{(\rm{1})}\,
If $m\geq n$, then we have the following result immediately by Theorem\,\ref{thm:theorem2} and (\ref{eq:factor1}).
\begin{equation}
T_{m,n}q^{m-n}=\frac{n!}{2^{n}}\binom{m+n}{n}i^{-n}\left(\frac{1+T}{2}\right)_{m-n}P_{n}^{(\frac{1+m-n}{2})}\left(\frac{i(T+m-n)}{2};\frac{\pi}{2}\right).
\end{equation}
The case of $n\geq m$ is similar.\\
{(\rm{2})}\,
If $m\geq n$, then a explicit expression of the Poincare-Birkhoff-Witt theorem for $T_{m,n}$ follows from (\ref{eq:theorem2}), (\ref{eq:Meixner-Pollaczek definition}) and (\ref{eq:factor1}). 
\begin{align}
T_{m,n}=\frac{1}{2^{n}}\frac{m!}{(m-n)!}\binom{m+n}{n}\sum_{k\geq 0}\binom{n}{k}\frac{2^{k}}{(1+m-n)_{k}}q^{k}p^{k+m-n}. 
\end{align}
The case of $n\geq m$ is similar.
\end{rem}

Recently, a generalization of Theorem\,\ref{thm:theorem2} using the multivariate Meixner-Pollaczek polynomials in the framework of the Gelfand pair has been established in \cite{FW}. 
Another proof of \cite{FW} in our current approach would be desirable. 
\section*{Acknowledgment}
The author would like to thank Professors Masato Wakayama and Hiroyuki Ochiai for many helpful comments. 
This work has been supported by the JSPS Research Fellowship.
\bibliographystyle{amsplain}

\noindent Graduate School of Mathematics, Kyushu University\\
744, Motooka, Nishi-ku, Fukuoka, 819-0395, JAPAN.\\
E-mail: g-shibukawa@math.kyushu-u.ac.jp

\end{document}